\tikzset{n/.style = {draw, circle}}
\tikzset{e/.style = {fill=white}}
\newtheorem{problem}{Problem}
\newtheorem{theorem}{Theorem}
\newtheorem{definition}{Definition}
\DeclareMathOperator*{\argmin}{arg\,min}
\DeclareMathOperator*{\argmax}{arg\,max}
\newcommand{\ie}{\emph{i.e.}}
\newcommand{\eg}{\emph{e.g.}}
\newcommand{\cC}{\mathcal{C}}
\newcommand{\sC}{\mathscr{C}}
\newcommand{\cS}{\mathcal{S}}
\newcommand{\cT}{\mathcal{T}}
\newcommand{\cA}{\mathcal{A}}
\title{An alignment problem}
\author{Emma L. McDaniel, Armin R. Mikler, Chetan Tiwari, and Murray Patterson}
\date{Georgia State University}
\begin{document}

\maketitle

\begin{abstract}
  This work concerns an alignment problem that has applications in
  many geospatial problems such as resource allocation and building
  reliable disease maps.  Here, we introduce the problem of optimally
  aligning $k$ collections of $m$ spatial supports over $n$ spatial
  units in a $d$-dimensional Euclidean space.  We show that the
  1-dimensional case is solvable in time polynomial in $k$, $m$ and
  $n$.  We then show that the 2-dimensional case is NP-hard for 2
  collections of 2 supports.  Finally, we devise a heuristic for
  aligning a set of collections in the 2-dimensional case.
\end{abstract}

\section{Introduction}
\label{sec:intro}

This work concerns the problem of aligning collections of spatial
supports which share a common set of spatial units.  For example,
Figures~\ref{fig:eg-a} and~\ref{fig:eg-b} each depicts a collection of
four supports (green, yellow, orange, and blue) which shares a common
set of 16 spatial units (rectangular blocks).  The goal is to swap
units from one support to another within each collection (change the
colors of blocks) until the collections are identical, \ie, are
aligned, as depicted in Figure~\ref{fig:eg-c}.  Note that there are
many different ways to align the supports, \ie, the alignment depicted
in Figure~\ref{fig:eg-c} is not unique.  With this in mind, it would
be preferable to align such collections using the minimum number of
(possibly weighted) swaps.  This optimization problem is easy in some
cases, and (NP-) hard in others.

In general, the alignment problem is on a set $U$ of $n$ \emph{spatial
  units}, each unit $u \in U$ having a population count $p(u)$ within
a certain spatial boundary, disjoint from other units.  These spatial
units can represent census tracts or ZIP code tabulation areas.
Constructing maps, \eg, choropleth maps, which reflect certain rates
within a population, such as cancer incidence, provides an intuitive
way to portray the geospatial patterns of such rates.  This can
provide decision support in public health surveillance, which can aid
officials to form the appropriate policies.
Building such a map at the level of an individual unit can produce
misleading results due to small populations in some units, resulting
in statistically unstable rates.  To remedy this issue, sets $s
\subseteq U$ of contiguous units are aggregated to create larger
\emph{spatial supports} with adequate population counts to ensure a
stable rate calculation, as depicted in Figure~\ref{fig:eg-a}.

\begin{figure}[!ht]
  \centering
  \subfloat[][]{\includegraphics[width=.33\textwidth]{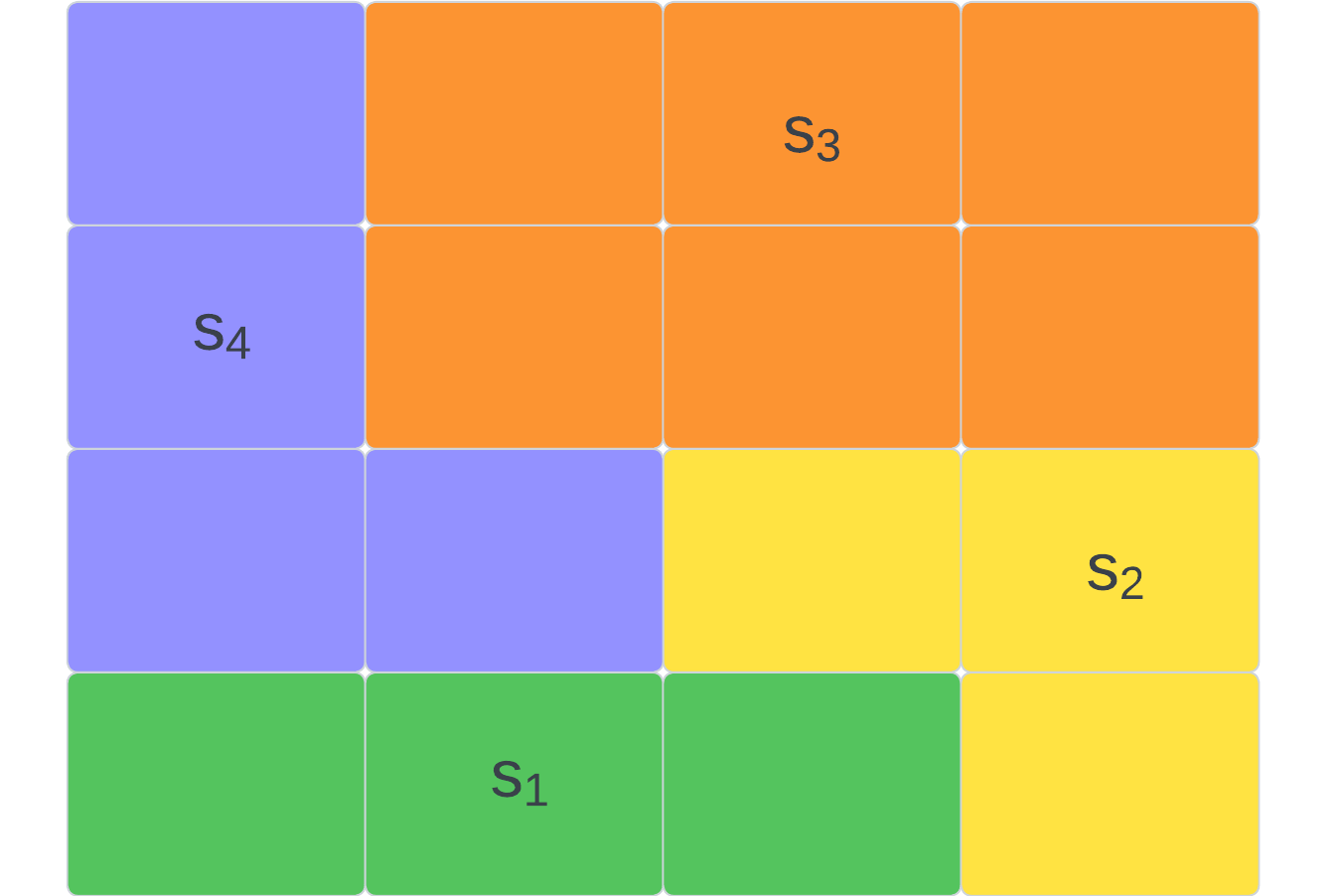}\label{fig:eg-a}}
  \subfloat[][]{\includegraphics[width=.33\textwidth]{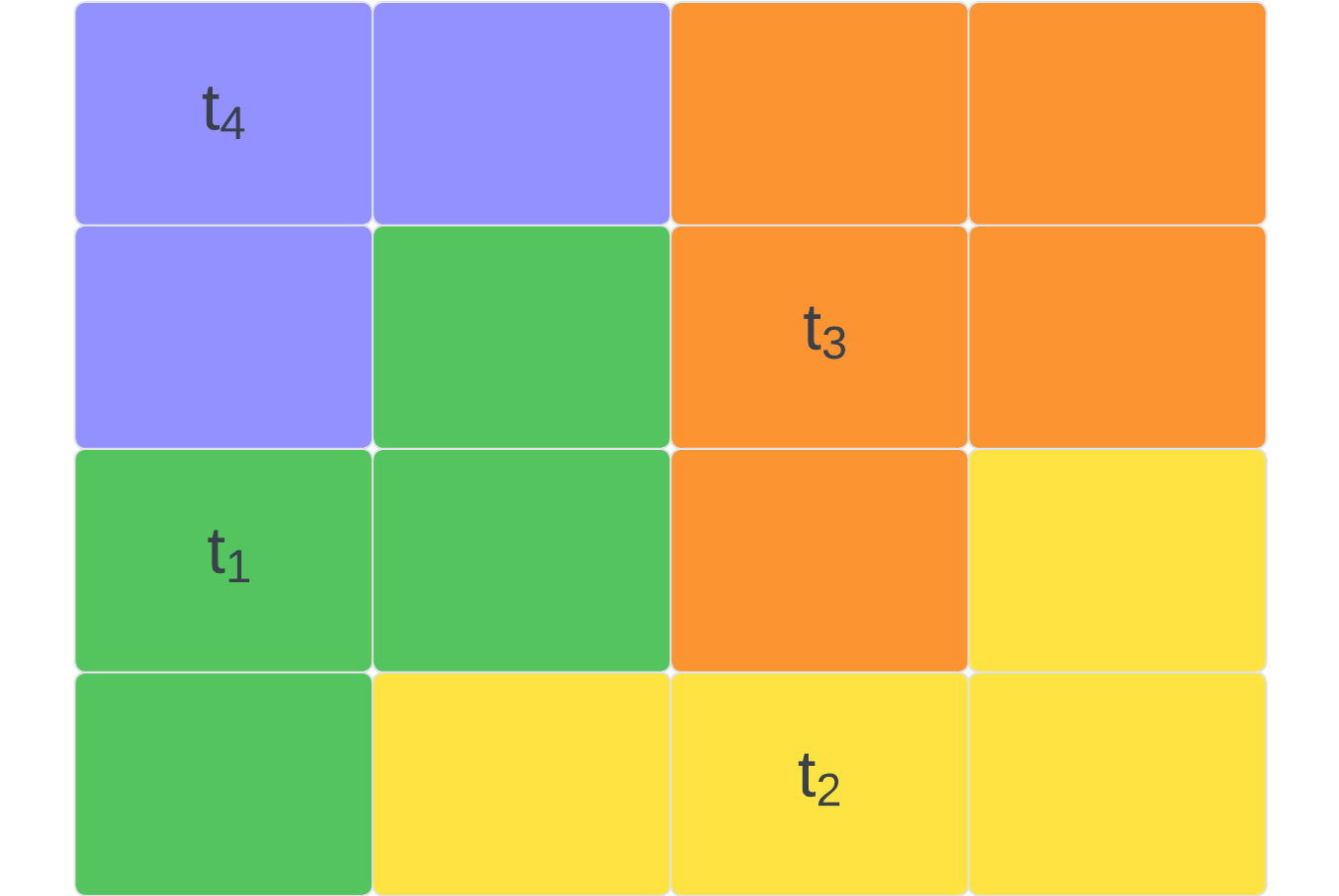}\label{fig:eg-b}}
  \hfil
  \subfloat[][]{\includegraphics[width=.294\textwidth]{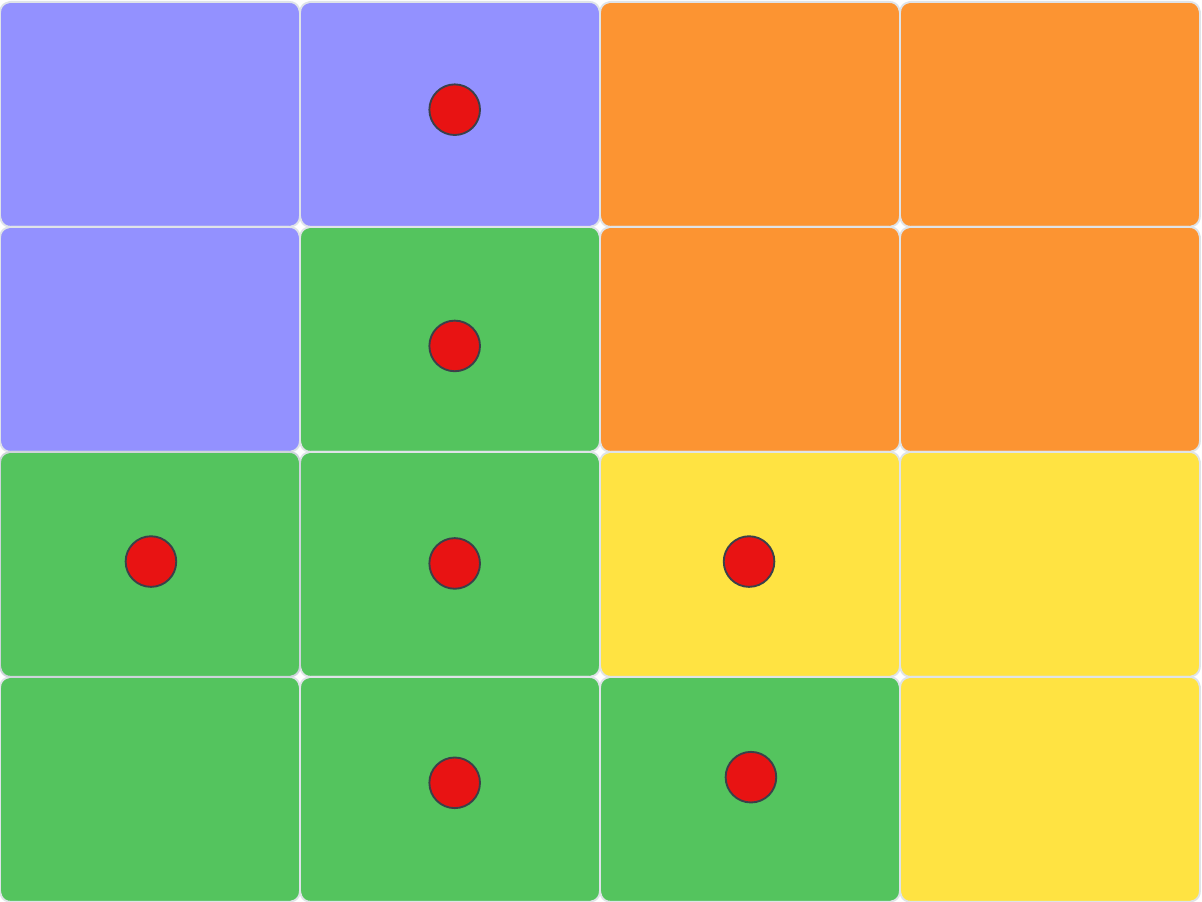}\label{fig:eg-c}}
  \caption{Collections (a) $\cS = \{s_1, s_2, s_3, s_4\}$ and (b) $\cT
    = \{t_1, t_2, t_3, t_4\}$ of spatial supports over the set $U =
    \{u_{1,1}, u_{1,2}, \dots, u_{4,4}\}$ of 16 spatial units.  An
    alignment (c) of $\cS$ and $\cT$.  The red dots mark the units
    ($u_{2,1}, u_{3,1}, u_{1,2}, u_{2,2}, u_{3,2}, u_{2,3}, u_{2,4}$)
    on which $\cS$ and $\cT$ disagree.}
  \label{fig:eg}
\end{figure}

Suppose a certain rate, \eg, prostate cancer incidence, can be mostly
explained by a factor such as age.  In this case, we want to create
several maps, which represent each age stratum in order to more
clearly portray this factor in determining such rate.  For the sake of
illustration, suppose that $\cS$ and $\cT$, depicted in
Figures~\ref{fig:eg-a} and~\ref{fig:eg-b} are two such maps,
represented as collections of supports over $U$.  In $\cS$, the
populations $p_\cS(u)$ of each unit $u$ in $s_1, s_2, s_3, s_4$ are
20, 20, 10, 15, respectively---\eg, $p_\cS(u_{1,1}) = 20$ ($u_{1,1}
\in s_1$), and $p_\cS(u_{4,4}) = 10$ ($u_{4,4} \in s_3$)---while the
populations $p_\cT(u)$ of each unit $u$ in $t_1, t_2, t_3, t_4$ are
15, 15, 12, 20, respectively.  In this way, the total population in
any support (of $\cS$ or $\cT$) is 60.  We want to consolidate the
information across these maps onto a single map, however, and this
requires to \emph{align} their collections of supports.  To align
collections of supports is to modify the supports of all collections,
in terms of the units they contain, such that the resulting supports
remain contiguous, and the resulting collections are identical.  This
can be viewed as ``swapping'' units between neighboring supports until
the desired alignment is reached.  For example, Figure~\ref{fig:eg-c}
depicts an alignment of collections $\cS$ and $\cT$.  Such an
alignment is obtained from $\cS$ by swapping $u_{1,2}$ and $u_{2,2}$
from $s_4$ (blue) to $s_1$ (green), $u_{2,3}$ from $s_3$ (orange) to
$s_1$ (green), and $u_{2,4}$ from $s_3$ (orange) to $s_4$ (blue).  The
alignment is obtained from $\cT$ by swapping $u_{2,1}$ and $u_{3,1}$
from $t_2$ (yellow) to $t_1$ (green), and $u_{3,2}$ from $t_3$
(orange) to $t_2$ (yellow).

Since any collection of contiguous supports---including supports that
may not be currently present, \eg, a hypothetical $s_5$---is an
alignment, it is desirable to produce an alignment that minimizes the
maximum number of changes in any one collection.  Since $\cS$ and
$\cT$ disagree on 7 units ($u_{2,1}, u_{3,1}, u_{1,2}, u_{2,2},
u_{3,2}, u_{2,3}, u_{2,4}$, annotated with the red dots in
Figure~\ref{fig:eg-c}), one collection must have at least 4 changes
(the other collection having 3 changes), hence the alignment depicted
in Figure~\ref{fig:eg-c} satisfies this criterion.  This need to
adjust leads to a notion of a \emph{distance}, $d(\cS,\cT)$, between a
pair $\cS$ and $\cT$ of collections of spatial supports, namely the
number of swaps needed to transform $\cS$ into $\cT$---this is simply
the number of units on which the pair of collections disagree.  Here,
$d(\cS,\cT) = 7$, and since an alignment is just another collection,
if we denote the alignment of Figure~\ref{fig:eg-c} as collection
$\cA$ of supports, then $d(\cS,\cA) = 4$, and $d(\cT,\cA) = 3$.  Note
that this distance is symmetric.

The units of the different collections being swapped contain
populations, however.  Hence each swap has an associated \emph{cost},
namely the population $p_\cC(u)$ of the unit $u$ in the collection
$\cC$ being swapped.  For example, in $\cS$, swapping $u_{1,2}$ from
$s_4$ (blue) to $s_1$ (green) costs $p_S(u_{1,2}) = 15$.  This idea
leads to a notion of a \emph{weighted} distance $d_w(\cS,\cA)$ between
a pair $\cS$ and $\cA$ of collections of spatial supports, or the
overall cost of the swaps needed to transform $\cS$ into $\cA$.  Here,
$d_w(\cS,\cA) = p_S(u_{1,2}) + p_S(u_{2,2}) + p_S(u_{2,3}) +
p_S(u_{2,4}) = 15 + 15 + 10 + 10 = 50$, while $d_w(\cT,\cA) =
p_T(u_{2,1}) + p_T(u_{3,1}) + p_T(u_{3,2}) = 15 + 15 + 12 = 42$.  Note
that this weighted distance is \emph{not} symmetric, \ie,
$d_w(\cS,\cT) \ne d_w(\cT,\cS)$ in general.  So, more precisely, it is
desirable to produce an alignment $\cA$ that minimizes the maximum
weighted distance $d_w(\cC,\cA)$ between any collection $\cC$ of
spatial supports and $\cA$.  After careful inspection, no alignment
can achieve such a weighted distance less than 50, hence the alignment
depicted in Figure~\ref{fig:eg-c} satisfies this weighted criterion as
well.  We formalize the alignment problem as follows.

\newcommand{\theinput}{
  \textbf{Input}: A base set $U = \{u_1, \dots, u_n\}$ of units over
  some Euclidean geospatial region and set $\sC = \{\cC_1, \dots,
  \cC_k\}$ of collections of spatial supports.  Each unit $u \in U$
  has population $p_\cC(u)$ from $p_\cC : U \rightarrow \mathbb{N}$,
  specific to each collection $\cC \in \sC$.  Each $\cC \in \sC$ is a
  collection $\{s_\cC^1, \dots s_\cC^m\}$ of contiguous supports such
  that: (a) $s \subseteq U$ for each $s \in \cC$; (b) $s \cap t =
  \emptyset$ for any pair $s,t \in \cC$ of distinct supports; and (c)
  $\bigcup_{s \in \cC} s = U$.
}
    
\begin{problem}[The Alignment Problem]~

  \theinput
  
  \textbf{Output}: A collection $\cA$ of contiguous supports which
  satisfies properties (a--c) above, such that $\max\{d_w(\cC,\cA) ~|~
  \cC \in \sC\}$ is minimized.

  \label{prob:alignment}
\end{problem}

Note that properties (a--c) ensure that the set of supports partitions
the base set $U$.  That is, (a) supports contain sets of contiguous
units, (b) pairs of distinct supports are disjoint, and (c) the
supports cover the base set $U$.

In Section~\ref{sec:tractability}, we show that if the Euclidean
geospatial region, that the set $U$ of units is over, is
1-dimensional, then Problem~\ref{prob:alignment}---which we will refer
to as the \emph{Alignment Problem}, when the context is clear---is
solvable in time polynomial in $k$, $m$ and $n$.  In
Section~\ref{sec:hardness}, we show that if the geospatial region is
2-dimensional---which is the typical case in this context of
constructing age-adjusted maps---then the Alignment Problem is
NP-hard, even in the case of 2 collections, each with 2 supports.
Finally, in Section~\ref{sec:heuristic}, we outline a heuristic for
the Alignment Problem in the 2-dimensional case.
Section~\ref{sec:conclusion} concludes the paper and outlines future
work.

\section{Tractability results}
\label{sec:tractability}

In this section, we show that if the Euclidean geospatial region, that
the set $U$ of units is over, is 1-dimensional, then the Alignment
Problem (Problem~\ref{prob:alignment}) is solvable in an amount of
time that is a polynomial function of $k, m$, and $n$.

Consider the set of four collections of three spatial supports over
the common set of 9 spatial units depicted in Figure~\ref{fig:1d}.
Because the 1-dimensional case is so restrictive, each support (\eg,
the orange support) is adjacent to at most two other supports, to the
left (\eg, the green support) and to the right (\eg, the blue
support).  The set of supports in each collection can hence be
enumerated from left to right (from 1 to $m$), and the $i$-th supports
($i \in [1,m]$) of each collection must align.  The disagreements
between $i$-th and $(i+1)$-th supports are then contained in a window
of width at most $n$ (the number $|U|$ of units).  Aligning these two
supports involves scanning this window to find the separator (between
a pair of units) of minimum cost.  For example, in
Figure~\ref{fig:1d}, the disagreements between the first (green) and
second (orange) supports are contained in the transparent window on
the left, with two choices of separator.  Suppose that each unit (of
each collection) of $U$ has a population of 1.  Placing the separator
between $u_2$ and $u_3$ implies coloring $u_3$ of each collection
orange, which has cost 3.  Placing the separator between $u_3$ and
$u_4$ instead, implies coloring $u_3$ of each collection green, which
has cost 1, which is lower.  The disagreements between the second
(orange) and third (blue) supports in Figure~\ref{fig:1d} are
contained in the transparent window on the right, with three choices
of separator.  Of these three choices, placing the separator between
$u_6$ and $u_7$ has cost 2, while the other two choices cost 4 each.

The window between a pair of neighboring supports has width at most
$n$ (the number $|U|$ of units), and height $k$ (the number of
collections).  Each step of a scan within the window from left to
right, for $n+1$ separators, involves updating $k$ units, for at most
$k(n+1)$ operations.  There are $m-1$ such windows (between each pair
of neighboring supports of a set of $m$ supports), hence the overall
number of operations is at most $k(n+1)(m-1) \in O(kmn)$.  Note that
windows may overlap, however, since supports are nonempty and
enumerated from left to right, no window will be contained in another.
This implies that the best separator of the window that begins to the
left of another window will also be to the left of the best separator
of the other window.  If they are at the same position, it simply
means that the corresponding support (say $i$) is empty---the optimal
alignment of $m$ supports comprises $m-1$ supports in this case.

\begin{figure}[!ht]
  \centering
  \includegraphics[width=.7\textwidth]{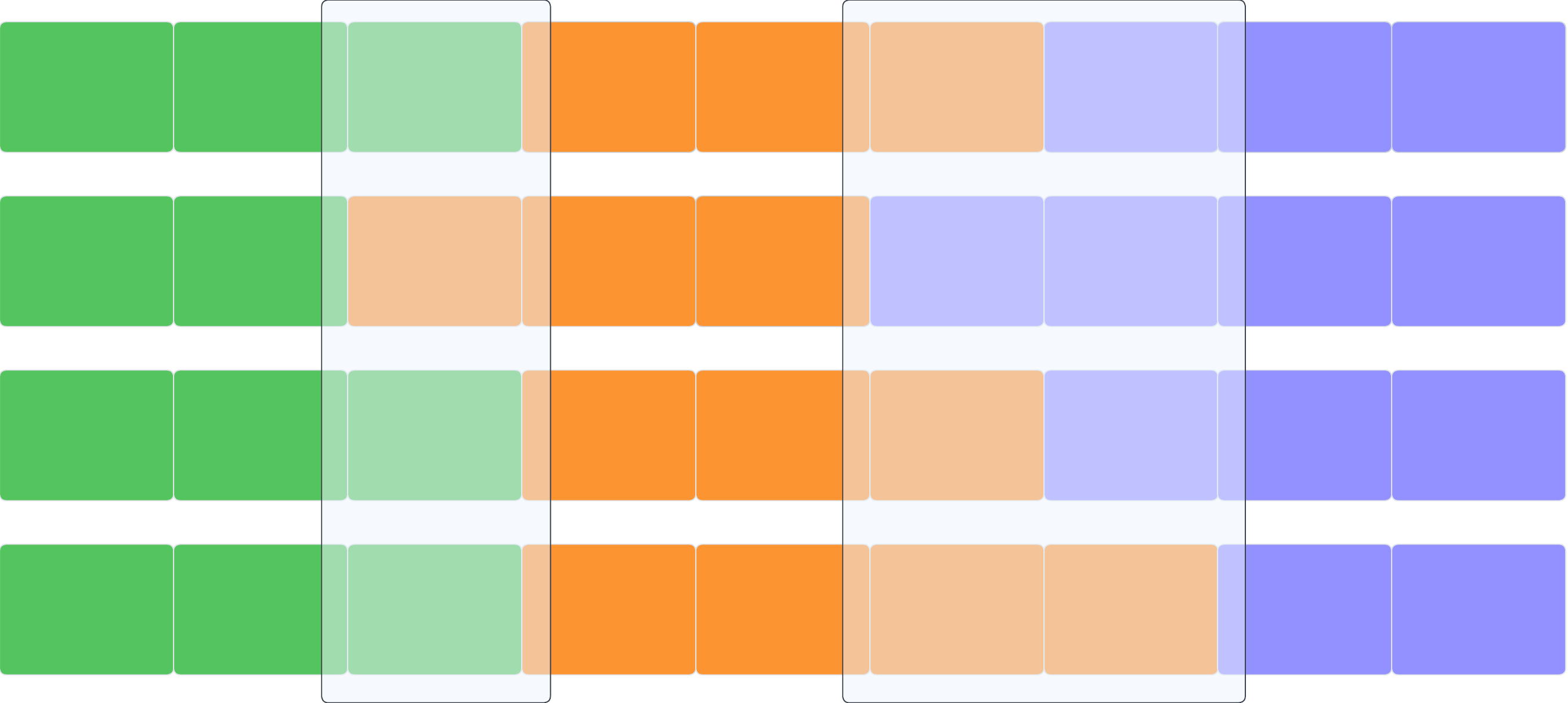}
  \caption{Four collections of three spatial supports (green, orange,
    and blue) over the set $U = \{u_1, u_2, \dots, u_9\}$ of 9 spatial
    units.  All disagreements between (the supports of) any pair of
    collections are contained in the two transparent windows.}
  \label{fig:1d}
\end{figure}

\section{Hardness results}
\label{sec:hardness}

In this section, we show that if the Euclidean geospatial region, that
the set $U$ of units is over, is 2-dimensional, then the Alignment
Problem~\ref{prob:alignment} is NP-hard.  The construction involves 2
collections, each with 2 supports.  This implies that the Alignment
Problem in $d$-dimensions is NP-hard for $k,m,d \ge 2$.

\begin{theorem}
  Problem~\ref{prob:alignment} in $d$-dimensions is NP-hard for $k,m,d
  \ge 2$.
  \label{thm:2d}
\end{theorem}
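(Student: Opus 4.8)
The plan is to reduce from a known NP-hard partitioning/balancing problem; the natural candidate is \textsc{Partition} (or, to control geometry, \textsc{Planar 3-SAT}-style gadgets), but since the construction must live in the plane with only $k=2$ collections and $m=2$ supports each, I would aim for a reduction from \textsc{Partition}: given positive integers $a_1,\dots,a_r$ summing to $2B$, decide whether some subset sums to $B$. First I would lay out $U$ as a planar arrangement of units whose populations, in one collection $\cS$, encode the $a_i$, while the two supports $s^1_\cS, s^2_\cS$ partition $U$ in a fixed way; in the second collection $\cT$, the two supports $t^1_\cT, t^2_\cT$ partition $U$ the ``opposite'' way along a seam, so that the set of units on which $\cS$ and $\cT$ disagree is exactly a controlled ``corridor'' of units carrying the weights $a_i$. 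An alignment $\cA$ must choose, for each disagreement unit, which side of the seam it lands on; because $\cA$'s two supports must be contiguous and partition $U$, the choice reduces (via the geometry of the corridor) to choosing a monotone cut, hence effectively to selecting a subset of the $a_i$ to ``move'' relative to $\cS$ and the complementary subset to ``move'' relative to $\cT$.

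The key steps, in order: (i) describe the gadget — a thin rectangular corridor of cells, each cell a unit, with large ``anchor'' regions on either end forcing the two non-corridor supports to be fixed in both collections, so contiguity is automatic and only the corridor cells are in play; (ii) set $p_\cS$ and $p_\cT$ on the corridor cells so that $d_w(\cS,\cA)$ equals the total $a_i$-weight of corridor cells that $\cA$ assigns to the ``$\cT$-side'', and $d_w(\cT,\cA)$ equals the total $a_i$-weight of cells assigned to the ``$\cS$-side'', these two quantities summing to $2B$ regardless of $\cA$; (iii) observe that $\max\{d_w(\cS,\cA),d_w(\cT,\cA)\}$ is minimized exactly when both equal $B$, which is possible iff the \textsc{Partition} instance is a yes-instance; (iv) add padding cells of population $0$ (or argue with a scaling/perturbation trick) so that contiguity never blocks an otherwise-balanced split, and verify the reduction is polynomial-time. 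Then $\max_{\cC}d_w(\cC,\cA)\le B$ iff a balanced partition exists, giving NP-hardness for $k=m=d=2$; embedding the planar gadget into $\mathbb{R}^d$ for $d\ge 2$ and padding with inert units handles all $k,m,d\ge 2$.

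The main obstacle I anticipate is the contiguity constraint interacting with the freedom to choose $\cA$: a priori $\cA$ could assign corridor cells to the two supports in a non-monotone, ``interleaved'' pattern that a \textsc{Partition} instance cannot mimic, or conversely contiguity could forbid the very split that balances the weights. I would handle this by engineering the corridor so that every contiguous partition of $U$ respecting the anchors corresponds to a ``prefix'' cut of the corridor — e.g.\ by making the corridor a path and forcing, via the anchor regions and the shapes of $s^1,s^2,t^1,t^2$, that each support of $\cA$ meets the corridor in a contiguous subpath; and, dually, arranging the weights $a_i$ so that the prefix sums hit every achievable subset sum, or else replacing \textsc{Partition} by a problem (such as a suitable scheduling or bin-balancing variant) whose yes-instances correspond precisely to prefix cuts. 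A secondary, lesser obstacle is confirming that the optimal $\cA$ need not introduce a third (empty) support or exploit the asymmetry of $d_w$; a short argument that any alignment is dominated by one of the corridor-cut form, together with the observation that an empty support only increases the relevant distances, closes this gap.
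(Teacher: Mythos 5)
Your overall strategy is the same as the paper's: a polynomial reduction from \textsc{Partition} in which two heavy ``anchor'' units (with population exceeding the total weight) pin down the identities of the two supports in any alignment within the target threshold, the disagreement units carry the \textsc{Partition} weights, and $d_w(\cS,\cA)+d_w(\cT,\cA)$ is invariant so that minimizing the maximum forces a balanced split. The paper's construction is exactly this, with $s_1=\{a\}\cup U$, $s_2=\{b\}$, $t_1=\{a\}$, $t_2=U\cup\{b\}$, $p_\cC(u_i)=x_i$, and $p_\cC(a)=p_\cC(b)=S+1$.

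However, there is a genuine gap in how you handle contiguity, and it sits at the heart of why the $2$-dimensional case is hard while the $1$-dimensional case is not. You propose to engineer the corridor so that ``every contiguous partition of $U$ respecting the anchors corresponds to a prefix cut of the corridor,'' and then either to choose weights whose prefix sums realize all subset sums or to switch to a source problem whose yes-instances are prefix cuts. This does not work: if the only realizable alignments are prefix cuts of a path of weighted cells, then the achievable values of $d_w(\cS,\cA)$ are just the $n+1$ prefix sums, the optimum can be found by a linear scan, and your reduction target collapses to (a weighted version of) the $1$-dimensional case, which Section~\ref{sec:tractability} shows is solvable in polynomial time. No NP-hard source problem can be faithfully encoded into prefix cuts alone. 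The resolution the paper uses is the opposite of restricting to monotone cuts: the units $u_1,\dots,u_n$ are laid out in a row sandwiched between the two anchor regions $a$ and $b$, with \emph{every} $u_i$ adjacent to both $a$ and $b$ (this is precisely what requires $d\ge 2$). Then $\{a\}\cup U_1$ and $U_2\cup\{b\}$ are contiguous for an \emph{arbitrary} partition $(U_1,U_2)$ of $U$, so every subset of $X$ is realizable and the equivalence with \textsc{Partition} is exact. Your secondary worry (an alignment with extra or empty supports, or exploiting the asymmetry of $d_w$) is handled in the paper by the observation that the anchors must lie in distinct supports of $\cA$ and that the support containing $a$ agrees with $s_1$ on some subset $U_1\subseteq U$, from which the partition is read off directly; your proposed domination argument would serve the same purpose.
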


\begin{proof}
  We first consider the following \emph{decision version} of the
  Alignment Problem~\ref{prob:alignment}.

  \begin{problem}[The Alignment Decision Problem]~

    \theinput

    \textbf{Decision}: Does there exist a collection $\cA$ of
    contiguous supports which satisfy properties (a--c) above, such
    that $\max\{d_w(\cC,\cA) ~|~ \cC \in \sC\} = D$?

    \label{prob:alignment-decision}
  \end{problem}

  Clearly, if the Alignment Decision
  Problem~\ref{prob:alignment-decision} is NP-hard, then so is its
  optimization version, Problem~\ref{prob:alignment}.  We now
  construct a polynomial (Karp) reduction from the following
  Partitioning Problem to the Alignment Decision
  Problem~\ref{prob:alignment-decision}.

  \begin{problem}[The Partitioning Problem]~
  
    \textbf{Input}: A multiset $X = \{x_1, \dots, x_n\}$ of positive
    integers.

    \textbf{Decision}: Does there exist a partition of $X$ into two
    disjoint ($X_1 \cap X_2 = \emptyset$) subsets $X_1 \subseteq X$
    and $X_2 \subseteq X$, such that the difference between the sum
    $\sum_{x \in X_1} x$ of elements in $X_1$ and the sum $\sum_{x \in
      X_2} x$ of elements in $X_2$ is $\Delta$?

    \label{prob:partitioning}
  \end{problem}
    
  The Partitioning Problem~\ref{prob:partitioning} is
  NP-hard~\cite{korf-2009-partitioning}.  We now build a reduction
  from the Partitioning Problem~\ref{prob:partitioning} to the
  Alignment Decision Problem~\ref{prob:alignment-decision} as follows.

  Given an instance $X = \{x_1, \dots, x_n\}$ of the Partitioning
  Problem~\ref{prob:partitioning}, we construct the base set $U \cup
  \{a, b\}$ of spatial units, where $U = \{u_1, \dots, u_n\}$, as
  depicted in Figure~\ref{fig:construction}.  We then introduce the
  two collections $\cS = \{s_1, s_2\}$ and $\cT = \{t_1, t_2\}$ of
  spatial supports, where $s_1 = \{a\} \cup U$, $s_2 = \{b\}$, $t_1 =
  \{a\}$, and $t_2 = U \cup \{b\}$.  For each collection $\cC \in
  \{\cS,\cT\}$, $p_\cC(u_i) = x_i$ for each $i \in \{1, \dots, n\}$,
  and $p_\cC(a) = p_\cC(b) = S+1$, where $S = \sum_{x \in X} x$.  The
  idea is that units $a$ and $b$ have large enough populations that
  they remain in different supports in any alignment of $\cS$ and
  $\cT$ within a given distance threshold.  In this case, the
  alignment is obtained by swapping only the elements of $U$ in either
  $\cS$ or $\cT$, which corresponds to a partition of $U$.  We prove
  the following claim to complete the proof.

  \begin{figure}[!ht]
    \centering
    \includegraphics[width=.7\textwidth]{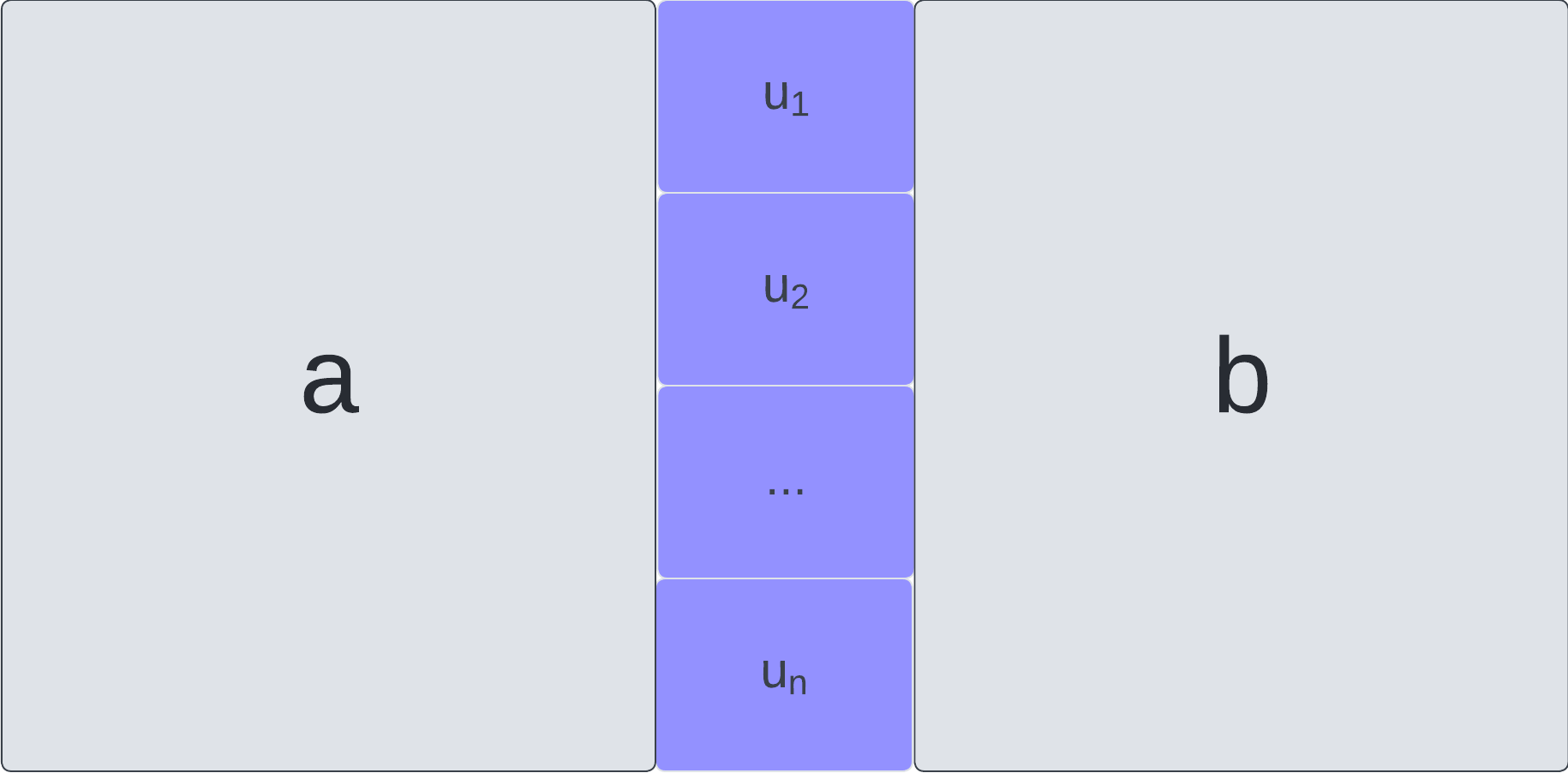}
    \caption{Base set $U \cup \{a, b\} = \{a, u_1, \dots, u_n, b\}$ of
      spatial units.}
    \label{fig:construction}
  \end{figure}

  \paragraph{Claim.}
  There exists a partition of $X$ where the difference between the sum
  of the two parts is $\Delta$ if and only if there exists a
  collection $\cA$ of contiguous supports which satisfies properties
  (a--c) of Problem~\ref{prob:alignment-decision} such that
  $\max\{d_w(\cS,\cA), d_w(\cT,\cA)\} = \frac{S+\Delta}{2}$.

  ($\Rightarrow$) Suppose there exists a partition $(X_1, X_2)$ of $X$
  such that the difference between $S_1 = \sum_{x \in X_1} x$ and $S_2
  = \sum_{x \in X_2} x$ is $\Delta$.  Then consider the collection
  $\cA$ with supports $\{a\} \cup U_1$ and $U_2 \cup \{b\}$---where
  $U_1$ (resp. $U_2$) is the set of units corresponding to $X_1$
  (resp. $X_2$).  By inspecting Figure~\ref{fig:construction}, it is
  clear that $\cA$ satisfies properties (a--c).  It follows that
  $d_w(\cS,\cA) = S_2$, since the units of $U_2$ need to be swapped
  from $s_1$ to $s_2$ in order to transform collection $\cS$ into
  $\cA$.  Conversely, $d_w(\cT,\cA) = S_1$, since the units of $U_1$
  need to be swapped from $t_2$ to $t_1$ in order to transform
  collection $\cT$ into $\cA$.  Suppose, without loss of generality,
  that $S_1$ is the larger sum, \ie, $S_1 > S_2$, hence $S_1 - S_2 =
  \Delta$.  Since $S_1 + S_2 = S$, it follows that $S_1 - (S - S_1) =
  \Delta$, then $2S_1 = S + \Delta$, and $S_1 = \frac{S+\Delta}{2}$.
  Since $S_1$ is the larger sum, it follows that $\max\{d_w(\cS,\cA),
  d_w(\cT,\cA)\} = S_1 = \frac{S+\Delta}{2}$.

  ($\Leftarrow$) Suppose there exists a collection $\cA$ of contiguous
  supports which satisfies properties (a--c) of
  Problem~\ref{prob:alignment-decision} such that $\max\{d_w(\cS,\cA),
  d_w(\cT,\cA)\} = \frac{S+\Delta}{2}$.  Since $\Delta \le S$, it
  follows that $\frac{S+\Delta}{2} \le S$, hence both $d_w(\cS,\cA)
  \le S$ and $d_w(\cT,\cA) \le S$.  Therefore, it must be the case
  that units $a$ and $b$ are in two different supports of $\cA$,
  otherwise a swap of weight at least $S+1$ would be needed to
  transform $\cS$ or $\cT$ into $\cA$, contradicting the assumption
  that $d_w(\cS,\cA) \le S$ and $d_w(\cT,\cA) \le S$.  Suppose,
  without loss of generality, that $d_w(\cS,\cA) =
  \frac{S+\Delta}{2}$.  Then the support of $\cA$ that contains unit
  $a$ must agree with $s_1$ on a subset $U_1 \subseteq U$ of units
  with $\sum_{u \in U_1} p_\cS(u) = \frac{S-\Delta}{2}$.  Since
  $\sum_{u \in U} p_\cS(u) = S$, it follows that $U_2 = U \setminus
  U_1$ has $\sum_{u \in U_2} p_\cS(u) = S - \frac{S-\Delta}{2} =
  \frac{S+\Delta}{2}$.  Since $\frac{S+\Delta}{2} - \frac{S-\Delta}{2}
  = \Delta$, it follows that $(U_1, U_2)$ is a partition of $U$ such
  that the difference between the sum of the (populations of the) two
  parts is $\Delta$, hence there exists such a partition of $X$.
\end{proof}

\section{A heuristic for the 2-dimensional case}
\label{sec:heuristic}

In this section, we give a (polynomial time) heuristic for the
Alignment Problem~\ref{prob:alignment} when the geospatial region is
2-dimensional.  We first give a heuristic for a pair of collections in
2 dimensions (in Sec~\ref{sec:pair}), and then show how this can be
extended to a general set of collections in 2 dimensions (in
Sec~\ref{sec:general}), \ie, to the general Alignment Problem in 2
dimensions.

\subsection{Aligning a pair of collections in 2 dimensions}
\label{sec:pair}

Figure~\ref{fig:eg} depicts an example of this special case of
aligning only two collections of supports---(a) and (b) in this case.
Since there are only two collections, the idea is that each support in
one collection is matched up with another support in the other
collection, then these pairs of supports are aligned with each other.
For example, in the instance depicted in Figure~\ref{fig:eg}, $s_i \in
\cS$ is matched up with $t_i \in \cT$ for each $i \in \{1,2,3,4\}$.
We first need the following definition of shared units graph.

\begin{definition}[Shared units Graph $G_x$]~

  The shared units graph, for pair $\cS, \cT$ of collections of
  spatial supports, is the weighted graph $G_x = (\cS, \cT, E = \cS
  \times \cT, w : E \rightarrow \mathbb{R})$, where each edge $e \in
  E$ has weight $w(e) = \sum_{u \in s \cap t} p_\cS(u) + p_\cT(u)$.
  \label{def:gx}
\end{definition}

The shared units graph $G_x$ gives a measure of the weighted overlap
of the supports between each collection.  This information will be
used to match up the supports between each collection.  More
precisely, supports between each collection will be paired up
according to a maximum weight perfect matching in $G_x$.  For example,
Figure~\ref{fig:gx} depicts the shared units graph $G_x$ of the
instance depicted in Figure~\ref{fig:eg}.  Here, \eg, $w(s_1,t_1) =
35$ because $s_1 \cap t_1 = u_{1,1}$, and $p_\cS(u_{1,1}) +
p_\cT(u_{1,1}) = 20 + 15 = 35$.  After careful inspection, the maximum
weight perfect matching of the graph depicted in Figure~\ref{fig:gx}
is $\{(s_1, t_1), (s_2, t_2), (s_3, t_3), (s_4, t_4)\}$, of weight 35
+ 70 + 88 + 70 = 263.  This is why the supports of this instance of
Figure~\ref{fig:eg} are paired up accordingly, as represented by the
matching colors.

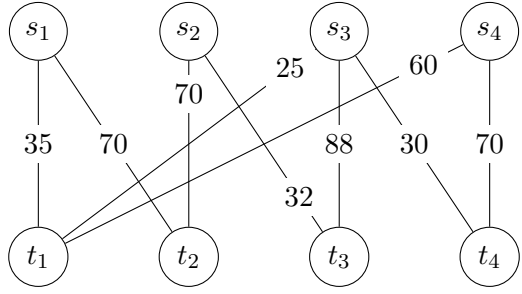
\begin{figure}[!ht]
  \centering
  \begin{tikzpicture}[node distance=2cm]
    \node[n] (s1) {$s_1$};
    \node[n] (s2) [right of=s1] {$s_2$};
    \node[n] (s3) [right of=s2] {$s_3$};
    \node[n] (s4) [right of=s3] {$s_4$};
    \node[n] (t1) [below of=s1, node distance=3cm] {$t_1$};
    \node[n] (t2) [right of=t1] {$t_2$};
    \node[n] (t3) [right of=t2] {$t_3$};
    \node[n] (t4) [right of=t3] {$t_4$};
    \path (s1) edge node[e] {$35$} (t1);
    \path (s1) edge node[e] {$70$} (t2);
    \path (s2) edge node[e,pos=.2] {$70$} (t2);
    \path (s2) edge node[e,pos=.8] {$32$} (t3);
    \path (s3) edge node[e,pos=.1] {$25$} (t1);
    \path (s3) edge node[e] {$88$} (t3);
    \path (s3) edge node[e] {$30$} (t4);
    \path (s4) edge node[e,pos=.1] {$60$} (t1);
    \path (s4) edge node[e] {$70$} (t4);
  \end{tikzpicture}
  \caption{The shared units graph $G_x$ (Definition~\ref{def:gx}) of
    the instance $\cS, \cT$ depicted in Figure~\ref{fig:eg}, where the
    populations $p_\cS(u)$ of each unit $u$ in $s_1, s_2, s_3, s_4$ of
    $\cS$ are 20, 20, 10, 15, respectively, while the populations
    $p_\cT(u)$ of each unit $u$ in $t_1, t_2, t_3, t_4$ of $\cT$ are
    15, 15, 12, 20, respectively.  Edges of zero weight are not shown
    for easier readability.}
  \label{fig:gx}
\end{figure} 

In general, a maximum weight perfect matching $M$ in a weighted graph
$G = (V,E,w)$ can be found in time $O(|V| \log|V| + |V| \cdot |E|)$
using a Fibonacci heap~\cite{fredman-1987-fib}.  Note that, in
general, the number of supports of one collection may be different
than the other.  Suppose, without loss of generality, that $|\cS| >
|\cT|$ for some pair $\cS, \cT$ of collections of supports.  In this
case, a perfect matching $M$ is found in the shared units graph $G_x$
for $\cS, \cT$, and each remaining unmatched support in $\cS$ is
associated to one of the supports in $\cT$.  After this process, each
support in $t \in \cT$ will be matched up with a support $s \in \cS$,
and possibly another subset $S \subseteq \cS$ of supports.  The idea
is that $\{s\} \cup S$ should be a contiguous set of supports in
$\cS$.  Hence, the criteria for associating each unmatched support in
$\cS$ with some support $t \in \cT$ is that the resulting subset $S
\subseteq \cS$ associated with $t$ is such that $\{s\} \cup S$, while
$w(s',t)$ for each $s' \in \cS$ is maximized.  Since we would expect
$|\cS| - |\cT|$ to typically be a constant, all combinations could be
tried to achieve this.  Hence the overall procedure of pairing each
support (or set of supports) of $\cS$ with a support in $\cT$ takes
polynomial time.  In cases where this does not hold ($|\cS| - |\cT|$
is not a constant), a more efficient algorithm for determining (or
approximating) this is the subject of future work.

The purpose of pairing each support $s$ (or set $S$ of supports) in
one collection $\cS$ to another support $t$ in the other collection
$\cT$ is to determine how the trading procedure, for aligning pair
$\cS,\cT$ of collections, operates.  In particular, each support $s
\in \cS$ (resp., set $S \subseteq \cS$) and its counterpart $t \in
\cT$ swap units with their respective neighbors until they are aligned
(are on the same set of units).  Figure~\ref{fig:eg-c} depicts an
alignment of collections $\cS$ (of Figure~\ref{fig:eg-a}) and $\cT$
(of Figure~\ref{fig:eg-b}) according to the maximum weight matching
$\{(s_1, t_1), (s_2, t_2), (s_3, t_3), (s_4, t_4)\}$ in the shared
units graph $G_x$ of $\cS, \cT$ depicted in Figure~\ref{fig:gx}.
While Figure~\ref{fig:eg-c} depicts an optimal alignment of these
collections $\cS$ and $\cT$, we outline a polynomial-time heuristic
for the general case, since it is NP-hard (see
Section~\ref{sec:hardness}).

Aligning a pair of collections of supports in 2 dimensions is a
partitioning problem (the NP-hardness proof of this case based on a
reduction from the Partitioning Problem~\ref{prob:partitioning}).
Hence, we apply a straightforward greedy partitioning heuristic to the
problem which is slightly more general than the
longest-processing-time-first (LPT) scheduling
heuristic~\cite{graham-1969-timing,coffman-1976-lpt}.  In LPT
scheduling, we are given a set of numbers and a positive integer $m$,
and the goal is to partition this set into $m$ subsets such that the
largest sum of any subset (in terms of the values of its elements) is
minimized.  This problem is NP-hard, because its decision version (the
Partitioning Problem~\ref{prob:partitioning}) is NP-hard.  The LPT
scheduling heuristic is to order the elements of the set from largest
to smallest, and iteratively place each element from this sorted list
in the subset (of $m$ subsets) with the smallest sum so far, until all
elements are placed.  In our variation, we have two sets, one for each
of the pair of collections.  That is, given the set $U' \subseteq U$
of units on which the pair, say $\cS, \cT$, of collections disagree
(based on the pairing of supports between $\cS$ and $\cT$), we first
sort $U'$ in descending order of population according to both $p_\cS$
and $p_\cT$, separately.  We then partition $U'$ into two parts $S$
and $T$, representing $\cS$ and $\cT$, respectively.  This is an
iterative process which considers the part with the currently lower
population (breaking ties arbitrarily), and adds the next element to
this part according to its ordering.  For example, if part $S$ has the
currently lower population, then $\sum_{u \in S} p_\cS(u) < \sum_{u
  \in T} p_\cT(u)$, and we would add the next largest element of $U'$
(according to $\cS$) to $S$.  The iteration terminates when all
elements of $U'$ have been assigned to either $S$ or $T$.

For example, consider the instance $\cS, \cT$ depicted in
Figure~\ref{fig:eg}, where the populations $p_\cS(u)$ of each unit $u$
in $s_1, s_2, s_3, s_4$ of $\cS$ are 20, 20, 10, 15, respectively,
while the populations $p_\cT(u)$ of each unit $u$ in $t_1, t_2, t_3,
t_4$ of $\cT$ are 15, 15, 12, 20, respectively.  Here, the set $U'$ of
units on which $\cS, \cT$ disagree is $U' = \{u_{2,1}, u_{3,1},
u_{1,2}, u_{2,2}, u_{3,2}, u_{2,3}, u_{2,4}\}$, annotated with the red
dots in Figure~\ref{fig:eg-c}.  Note that $U'$ is currently ordered in
reverse lexicographic order, starting from the lower left corner
($u_{1,1}$) and moving to the right, row by row, upward.  By sorting
this order in a stable way (the order of identical elements is not
disturbed) according to $p_\cS$, it becomes $u_{2,1} (20), u_{3,1}
(20), u_{3,2} (20), u_{1,2} (15), u_{2,2} (15), u_{2,3} (10), u_{2,4}
(10)$.  By sorting this order in a stable way according to $p_\cT$, it
becomes $u_{2,4} (20), u_{2,1} (15), u_{3,1} (15), u_{1,2} (15),
u_{2,2} (15), u_{2,3} (15), u_{3,2} (12)$.  The iteration then takes
the steps indicated by Table~\ref{tab:steps}, starting with empty
parts $S$ and $T$.  After this process completes, the resulting parts
$S$ and $T$ then join (take their current color in) the corresponding
supports $\cS$ and $\cT$, respectively, in order to produce the
alignment.  For example, the partitioning outlined in
Table~\ref{tab:steps} produces the alignment depicted in
Figure~\ref{fig:eg-c}, which is optimal.

\begin{table}[!ht]
  \centering
  \begin{tabular}{|l|l|l|l|}
    \hline
    step & action & part $S$ & part $T$ \\
    \hline
    0 & initialize $S$ \& $T$ & $\emptyset$ & $\emptyset$ \\
    1 & $S \leftarrow u_{2,1} (20)$ & $\{u_{2,1}\}$ (20) & $\emptyset$ \\
    2 & $T \leftarrow u_{2,4} (20)$ & $\{u_{2,1}\}$ (20) & $\{u_{2,4}\}$ (20) \\
    3 & $S \leftarrow u_{3,1} (20)$ & $\{u_{2,1}, u_{3,1}\}$ (40) & $\{u_{2,4}\}$ (20) \\
    4 & $T \leftarrow u_{1,2} (15)$ & $\{u_{2,1}, u_{3,1}\}$ (40) & $\{u_{2,4}, u_{1,2}\}$ (35) \\
    5 & $T \leftarrow u_{2,2} (15)$ & $\{u_{2,1}, u_{3,1}\}$ (40) & $\{u_{2,4}, u_{1,2}, u_{2,2}\}$ (50) \\
    6 & $S \leftarrow u_{3,2} (20)$ & $\{u_{2,1}, u_{3,1}, u_{3,2}\}$ (60) & $\{u_{2,4}, u_{1,2}, u_{2,2}\}$ (50) \\
    7 & $T \leftarrow u_{2,3} (15)$ & $\{u_{2,1}, u_{3,1}, u_{3,2}\}$ (60) & $\{u_{2,4}, u_{1,2}, u_{2,2}, u_{2,3}\}$ (65) \\
    \hline
  \end{tabular}
  \caption{Steps taken by the greedy approach to create parts $S$ and
    $T$.}
  \label{tab:steps}
\end{table}

In general, our greedy approach does not produce an optimal solution
to the Alignment Problem~\ref{prob:alignment}, however an upper bound
on the quality of the solution, $\max\{d_w(\cS,\cA), d_w(\cT,\cA)\}$,
can be obtained based on known approximation factors for LPT
scheduling~\cite{graham-1969-timing,coffman-1976-lpt}.  Given some
instance $\cS,\cT$ to the Alignment Problem, let $\sigma(u) =
\{p_\cS(u), p_\cT(u)\}$.  For example, from the instance mentioned
above, $\sigma(u_{2,1}) = \{p_\cS(u_{2,1}), p_\cT(u_{2,1})\} = \{20,
15\}$.  For some set $U'$ of units, let $M(U') = \{\max(\sigma(u)) ~|~
u \in U'\}$, the maximum values of the pairs $\sigma(u)$ of
populations represented by each $u$.  Note that our greedy approach
obtains a partitioning by effectively applying LPT scheduling to
$M(U')$, where $U'$ is the set of units on which a pair $\cS,\cT$ of
collections disagree (see Table~\ref{tab:steps}), hence we can bound
its quality based on known bounds for LPT scheduling.  It is known
that applying LPT scheduling to a set guarantees a solution that is
within a factor of $\frac{4m-1}{3m}$ times the optimal (minimum)
largest sum of any of the $m$
subsets~\cite{graham-1969-timing,coffman-1976-lpt}.  Supposing we
partition $M(U')$ into a pair ($m=2$) of parts using LPT scheduling,
let $A$ be the part with the larger sum, and $A^*$ be the part with
the larger sum in the optimal partitioning of $M(U')$ into two parts.
It then follows that \[ A \le \frac{4(2)-1}{3(2)} = \frac{7}{6} \cdot
A^*. \] Because the elements that we are partitioning are indivisible,
we know that \[ A^* \le \frac{\sum(M(U'))}{2} + \max(M(U')), \] where
$\sum(X) = \sum_{x \in X} x$, a short form for the sum of all values
in a set $X$ of values.  It follows that
\begin{equation}
  A \le \frac{7}{6} \left[ \frac{\sum(M(U'))}{2} + \max(M(U'))
    \right].
  \label{eq:auxbound}
\end{equation}
Let part $S$ be the set of units represented by part $A$.  The units
of $S$ were chosen based on the largest values from $M(U')$ at the
time, as represented by $A$.  The units of $S$ are used to transform
collection $\cS$ of supports into another collection $\cA$ of supports
(while the remaining units of $U' \setminus S$ are used to transform
collection $\cT$ into $\cA$).  It follows that $d_w(\cS,\cA) = m(S)$,
where $m(S) = \{\min(\sigma(u)) ~|~ u \in S\}$, the minimum values of
the pairs $\sigma(u)$ of populations represented by each $u$.  Since
$m(S) \le A$, by design, and $A$ is the larger part, \ie, $m(U'
\setminus S) \le A$ as well, it follows from
Equation~\ref{eq:auxbound} that
\begin{equation}
  \max\{d_w(\cS,\cA), d_w(\cT,\cA)\} \le \frac{7}{6} \left[
    \frac{\sum(M(U'))}{2} + \max(M(U')) \right].
  \label{eq:bound}
\end{equation}
Since a typical instance $\cS,\cT$ will contain many units $u$ which
do not differ much in $p_\cS(u)$ and $p_\cT(u)$, nor is $\max(M(U')$
and $\min(m(U')$ expected to differ by much, each collection $\cS$ and
$\cT$ will typically contribute close to half of their weight to the
alignment $\cA$.

There remain some small and final details to address in this
heuristic.  One detail is that the units of $U'$, on which $\cS$ and
$\cT$ disagree, cannot be placed into parts arbitrarily.  Rather, the
parts must be such that swapping their units results in an alignment
$\cA$ whose supports are contiguous (see the Alignment
Problem~\ref{prob:alignment}).  The example outlined in
Table~\ref{tab:steps} happens to create a contiguous set of supports,
as depicted in Figure~\ref{fig:eg-c}.  However, if units $u_{1,2}$ and
$u_{2,2}$ were assigned to part $S$ instead of part $T$, the green
support would not be contiguous, for example.  In a general instance,
such a constraint only needs to be minded for each contiguous set $U'$
of units on which $\cS$ and $\cT$ disagree.  For each such contiguous
set, some small local shuffles could be applied to each ordering of
$U'$ according to $p_\cS$ and $p_\cT$, respectively.  Another solution
could be to apply the iteration to $S$ and $T$ as is, but skipping any
greedy choice which violates contiguity.  In any case, the iteration
will be no worse than (unordered) list
scheduling~\cite{graham-1969-timing}.  In this case, it is known that
applying list ordering to a set guarantees a solution within a factor
of $2-\frac{1}{m}$ times the optimal (minimum) largest sum of any of
the $m$ subsets.  Since $m=2$ in this case, it follows that this
factor is $\frac{3}{2}$, and the same analysis as above can be
applied.  Since there will be few such constraints in the typical
instance, and they only apply to contiguous sets of units of $U'$,
which will be typically small, the solution is expected to be much
closer to $\frac{7}{6}$ (see Equation~\ref{eq:bound}) than
$\frac{3}{2}$, in practice.  The other detail is the unmatched
supports (in, \eg, $\cS$) associated with some support (\eg, $t \in
\cT$).  In this case, for the support in the final alignment $\cA$
that represents these will present another alignment subproblem within
that support, where this support could be split into several parts.
Since such supports are expected to be small in general, all ways to
align this support could be tried.  Nonetheless, a more systematic
procedure for minding such constraints, along with a more definite
approximation factor is the subject of future work.

\subsection{The Alignment Problem in 2 dimensions}
\label{sec:general}

We now outline how to extend the techniques used in the heuristic of
Section~\ref{sec:pair} to a general set $\sC = \{\cC_1, \dots,
\cC_k\}$ of collections in 2 dimensions, \ie, to the general Alignment
Problem in 2 dimensions.  We first need to match supports across all
collections $\sC$ in order to align them.  This amounts to finding a
maximum weight perfect matching in a complete $k$-uniform hypergraph
across all ($k$) collections of supports.  We need the following
definition of a shared units hypergraph, analogous to the shared units
graph of Definition~\ref{def:gx}.

\begin{definition}[Shared units Hypergraph $H_x$]~

  The shared units hypergraph, for a set $\sC = \{\cC_1, \dots,
  \cC_k\}$ of collections of spatial supports, is the weighted
  hypergraph $H_x = (\cC_1, \dots, \cC_k, E = \cC_1 \times \cdots
  \times \cC_k, w(e) : E \rightarrow \mathbb{R})$, where each
  hyperedge $e \in E$ has weight $w(e) = \sum_{(s,t) \in e^2} \sum_{u
    \in s \cap t} p_{\cC(s)}(u) + p_{\cC(t)}(u)$, where $\cC(s)$ is
  the collection that support $s$ belongs to.
  \label{def:hx}
\end{definition}

The weight of a hyperedge $e$ of $H_x$ is effectively the weighted
overlap of the set of $k$ supports, one from each collection $\cC_1,
\dots, \cC_k$, represented by $e$, in terms of the weighted overlap
between each pair $s,t$ of supports from $e$.  Finding a perfect
matching in $H_x$ is
NP-hard~\cite{karp-1972-reducibility,garey-johnson}.  This problem is
a special case of the $k$-set packing problem, which can be
approximated within a factor of $\frac{k+1+\varepsilon}{3}$ times the
optimal packing~\cite{cygan-2013-3DM,furer-2014-set}.  Since $k$ is
typically a small constant (less than 10, for example), this bound is
acceptable in practice.  When the number of supports in the
collections differ, a perfect matching $M$ (of size $\argmin_{\cC \in
  \sC} |\cC|$) is found in the shared units hypergraph $H_x$, and each
remaining unmatched support $s$ in any collection $\cC \in \sC$ is
associated to one of the hyperedges in $M$ of maximum overlap with
$s$.  Similarly to the case with a pair of collections, the hyperedge
that $s$ joins should maintain a contiguous set of supports in
$\cC(s)$, the collection that support $s$ belongs to.  Since, again,
$\argmax_{\cC \in \sC} |\cC| - \argmin_{\cC \in \sC} |\cC|$ should
typically be a constant, all combinations of hyperedges for $s$ to
join could be tried to achieve contiguity, however a more efficient
algorithm for determining these choices is the subject of future work.

Analogously to the case with a pair of collections (of
Section~\ref{sec:pair}), matching up sets of supports across
collections is to determine how the trading procedure, for aligning
collections $\sC$, operates.  In particular, each set of supports from
matching $M$ in $H_x$ (with the extra unmatched supports joined later)
swap units with their respective neighbors until they are aligned.
Similar to the case with pairs, the matching $M$ gives rise to a set
$U' \subseteq U$ of units on which some pair $\cC,\cC' \in \sC$ of
collections disagree.  Each such unit must be assigned to some support
(in $M$) in a way that minimizes overall cost.  Aligning the units of
$U'$ in this way is again a type of partitioning problem, which could
also be approximated using LPT scheduling, however a slightly more
general partitioning problem is more appropriate in this case.  In
particular, this is more closely related to a case of the problem of
fair item allocation~\cite{demko-1988-equitable}, with additive
preferences~\cite{bouveret-2010-fair} and positively valued goods.
Note that there exist versions with negatively weighted goods, or
chores, as well~\cite{aziz-2017-chores}.

The input to this problem is a set $N$ of $|N| = n$ agents and a set
$M$ of $|M| = m$ items.  We use the elements $i \in N$ of a set $N$
and its corresponding indices $i \in \{1,\dots,n\}$ interchangeably,
when the context is clear.  Each agent $i \in N$ attaches a value
$v_i(j)$ to item $j \in M$, where $v_i(j) \in \mathbb{Z}^+ ~\forall i
\in N ~\forall j \in M$.  We also overload the meaning of $v$ for
subsets $S \subseteq M$, where $v_i(S) = \sum_{j \in S} v_i(j)$, since
values are additive.  Let $\Pi_n(M)$ be the collection of all
partitionings of set $M$ into $n$ parts.  The goal is to find a
partitioning, in $\Pi_n(M)$, that gives each agent their fairest share
of value from the items.  A common formalization for this is the
\emph{maximin share}~\cite{barman-2020-maximin} of agent $i \in N$
from a set $M$ of items, which is
\begin{equation}
  \mu_i^n(M) = \max_{(M_1, M_2, \dots, M_n) \in \Pi_n(M)} ~ \min_{k
    \in N} ~ v_i(M_k).
  \label{eq:maximin}
\end{equation}
The idea is that if agent $i \in N$ were to divide items $M$ into $n$
parts, and then other agents chose how these $n$ parts were
distributed among the $n$ agents, then agent $i$ would partition the
items such that value $v_i$ of the smallest part $M_k$ is maximized.
In fair item allocation, the goal is to partition the items such that
each agent $i \in N$ has a value that is closest to their maximin
share $\mu_i$ as possible.  An important approximation result is that
a partitioning $(M_1, \dots, M_n) \in \Pi_n(M)$ which satisfies
\begin{equation}
  v_i(M_i) \ge \frac{2}{3} \mu_i^n(M) ~\forall i \in N
  \label{eq:approx}
\end{equation}
can be found in polynomial time~\cite{barman-2020-maximin}.

Our problem of aligning each unit of $U'$ is closely related to this
problem, in that each collection $\cC \in \sC$ is an agent, and each
unit $u \in U'$ is an item that gets assigned to some collection when
aligned, where $v_\cC(u) = p_\cC(u)$.  The only difference is that the
collection $\cC$ to which $u$ is assigned avoids the cost $p_\cC(u)$,
while every other collection $\cC' \in \sC \setminus \{\cC\}$ incurs
(at most) its corresponding cost $p_{\cC'}(u)$.  Since we want to
minimize the maximum cost to any collection (see
Problem~\ref{prob:alignment}), we are rather aiming, for each
collection $i \in N = \sC$, given set $M = U'$ of units, to minimize
\begin{equation}
  \gamma_i^n(M) = \min_{(M_1, M_2, \dots, M_n) \in \Pi_n(M)} ~ \max_{k
    \in N} \sum_{j \in N \setminus \{k\}} v_i(M_j).
  \label{eq:minimax}
\end{equation}
Note that this is equivalent to \[ \gamma_i^n(M) = \min_{(M_1, M_2,
  \dots, M_n) \in \Pi_n(M)} ~ \max_{k \in N} ~ C_i(M) - v_i(M_k), \]
where $C_i(M) = \sum_{j \in M} v_i(j)$.  Since $C_i(M)$ does not
depend on the partition chosen from $\Pi_n(M)$, it follows that \[
\gamma_i^n(M) = C_i(M) + \min_{(M_1, M_2, \dots, M_n) \in \Pi_n(M)} ~
\max_{k \in N} ~ -v_i(M_k). \] In pulling the minus sign through to
the front, it follows that \[ \gamma_i^n(M) = C_i(M) - \max_{(M_1,
  M_2, \dots, M_n) \in \Pi_n(M)} ~ \min_{k \in N} ~ v_i(M_k). \] We
can then substitute the lefthand side of Equation~\ref{eq:maximin}
with the righthand side to obtain
\begin{equation}
  \gamma_i^n(M) = C_i(M) - \mu_i^n(M).
  \label{eq:equivalence}
\end{equation}
Then, based on the result of Equation~\ref{eq:approx}, it follows that
a partitioning $(M_1, \dots, M_n) \in \Pi_n(M)$ which satisfies
\begin{equation}
  \sum_{j \in N \setminus \{i\}} v_i(M_j) \le C_i(M) - \frac{2}{3}
  \mu_i^n(M) ~\forall i \in N
  \label{eq:maxbound}
\end{equation}
can be found in polynomial time.

Placing this result in the notation of our problem, where $N = \sC$,
and $M = U'$, it follows that $\sum_{j \in N \setminus \{i\}} v_i(M_j)
= \sum_{\cC' \in \sC \setminus \{\cC\}} p_\cC({U'}_{\cC'})$, where
${U'}_{\cC'}$ are the units from $U'$ assigned to collection $\cC'$ in
the alignment $\cA$ represented by partitioning $(U_1, \dots, U_n) \in
\Pi_n(U')$, and the meaning of $p_\cC$ has been overloaded for sets,
where $p_\cC(U') = \sum_{u \in U'} p_\cC(u)$.  Observe that
$d_w(\cC,\cA) = \sum_{\cC' \in \sC \setminus \{\cC\}}
p_\cC({U'}_{\cC'})$.  Then it follows from Equation~\ref{eq:maxbound}
that
\begin{equation}
  d_w(\cC,\cA) \le \sum_{u \in U'} p_\cC(u) - \frac{2}{3}
  \mu_\cC^n(U') ~\forall \cC \in \sC,
  \label{eq:distbound}
\end{equation}
where $U'$ is the set of units on which some pair of collections of
$\sC$ disagree, and $\mu_\cC^n(U')$ is the maximin share of collection
$\cC$ from set $U'$ of units, where the value of a unit is $p_\cC(u)$.
This guarantees a bound on $\max\{d_w(\cC,\cA) ~|~ \cC \in \sC\}$ (see
Problem~\ref{prob:alignment}) which can be obtained in polynomial
time.  The approximation result in of Equation~\ref{eq:approx}
from~\cite{barman-2020-maximin} relies on a complex preprocessing step
from~\cite{bouveret-2016-conflicts} in order to guarantee this
theoretical bound.  However, in practice will plan to use a more
straightforward approach based on the envy-graph
procedure~\cite{lipton-2004-envy,barman-2020-maximin,aziz-2017-chores},
which is the common approach used for fair item allocation.  Such an
approach iterates through the items, assigning them to agents.  If
ever an envy cycle arises in this process---a directed cycle on a set
of agents where each agent values more the intermediate set of items
of her neighbor---then this cycle is broken by shifting this cycle one
step in opposite direction.  This process continues until all items
are assigned to some agent.  While there are many theoretical results
in this area of fair item allocation, there exist some practical
results such as \url{spliddit.org}~\cite{goldman-2014-spliddit}, based
on theoretical results in~\cite{procaccia-2014-fair}.  We plan to use
or follow these ideas in devising a practical algorithm for our
problem.  Similarly to the case of pairs in two dimensions (of
Section~\ref{sec:pair}), maintaining contiguity, and how to manage the
unmatched supports associated after the matching was computed.  An
efficient implementation addressing all of these details is the
subject of future work.

\section{Conclusion}
\label{sec:conclusion}

In this paper, we introduce an alignment problem for reconciling
misaligned boundaries of regions comprising spatial units.  While the
general problem is combinatorially (NP-) hard, and a rather trivial
case (in 1 dimension) is tractable, we devise some heuristics for the
case in 2-dimensions, since it has applications in many geospatial
problems such as resource allocation or building disease maps.

Future work entails further investigation into small details such as
maintaining contiguity in the trading procedure, and how to
systematically handle collections with different numbers of supports.
Developing an implementation which works efficiently in practice is
also the subject of future work, so that can be applied to real
geospatial problems such as resource allocation and automated map
building.

\section*{Acknowledgements}

The authors would like to thank Alexander Zelikovsky for some helpful
discussions on interpreting the approximation results.

\end{document}